\newtheorem{theorem}{Theorem}
\begin{document}
	\title{Providing Wireless Coverage to High-rise Buildings Using UAVs}
	\author[1]{Hazim Shakhatreh}
	\author[1]{Abdallah Khreishah}
	\author[2]{Bo Ji}
	\affil[1]{Department of Electrical and Computer Engineering, New Jersey Institute of Technology}
	\affil[2]{Department of Computer and Information Sciences, Temple University }
	\maketitle
	
	\begin{abstract}
		Unmanned aerial vehicles (UAVs) can be used as aerial wireless base stations when cellular networks go down. Prior studies on UAV-based wireless coverage typically consider an Air-to-Ground path loss model, which assumes that the users are outdoor and they are located on a 2D plane. In this paper, we propose using a single UAV to provide wireless coverage for indoor users inside a high-rise building under disaster situations (such as earthquakes or floods), when cellular networks are down. First, we present a realistic Outdoor-Indoor path loss model and describe the tradeoff introduced by this model. Then, we study the problem of efficient UAV placement, where the objective is to minimize the total transmit power required to cover the entire high-rise building. The formulated problem is non-convex and is generally difficult to solve. To that end, we consider two cases of practical interest and provide the efficient solutions to the formulated problem under these cases. In the first case, we aim to find the minimum transmit power such that an indoor user with the maximum path loss can be covered. In the second case, we assume that the locations of indoor users are symmetric across the dimensions of each floor. 
	\end{abstract}
	
	\begin{IEEEkeywords}
		Unmanned aerial vehicles, Outdoor-to-Indoor path loss model.
	\end{IEEEkeywords}
	
	\section{Introduction}
	\label{sec:Introduction}
	UAVs can be used to provide wireless coverage during emergency cases where each UAV serves as an aerial wireless base station when the cellular network goes down~\cite{bupe2015relief}. They can also be used to supplement the ground base station in order to provide better coverage and higher data rates for the users~\cite{bor2016efficient}.
	
	In order to use a UAV as an aerial wireless base station, the authors in~\cite{al2014modeling} presented an Air-to-Ground path loss model that helped the academic researchers to formulate many important problems. The authors of~\cite{mozaffari2015drone,mozaffari2016optimal,mozaffari2016efficient,kalantari2016number,shakhatreh2016continuous} utilized this model to study the problem of UAV placement, where the objective is to minimize the number of UAVs for covering a given area.
	The authors of~\cite{mozaffari2015drone} described the tradeoff in this model. At a low altitude, the path loss between the UAV and the ground user decreases, while the probability of line of sight links also decreases. On the other hand, at a high altitude line of sight connections exist with a high probability, while the path loss increases.
	However, it is assumed that all users are outdoor and the location of each user can be represented by an outdoor 2D point. These assumptions limit the applicability of this model when one needs to consider indoor users.
	
	Providing good wireless coverage for indoor users is very important. According to Ericsson report~\cite{ericsson}, 90\% of the time people are indoor and 80\% of the mobile Internet access traffic also happens indoors~\cite{alcatel,cisco}. To guarantee the wireless coverage, the service providers are faced with several key challenges, including providing service to a large number of indoor users and the ping pong effect due to interference from near-by macro cells~\cite{comm,amplitic,zhang2016study}. In this paper, we propose using a single UAV to provide wireless coverage for users inside a high-rise building during emergency cases, when the cellular network service is not available.

	To the best of our knowledge, this is the first work that proposes using a UAV to provide wireless coverage for indoor users. We summarize our main contributions as follows. First, we assume an Outdoor-Indoor path loss model~\cite{series2009guidelines}, certified by ITU, and show the tradeoff introduced by this model. Second, we formulate the problem of efficient UAV placement, where the objective is to minimize the total transmit power required to cover the entire high-rise building. Third, since the formulated problem is non-convex and is generally difficult to solve, we consider two cases of practical interest and provide the efficient solutions to the formulated problem under these cases. In the first case, we aim to find the minimum transmit power such that an indoor user with the maximum path loss can be covered. In the second case, we assume that the locations of indoor users are symmetric across the dimensions of each floor, and propose a gradient descent algorithm for finding the efficient location of the UAV.

	The rest of this paper is organized as follows. In Section II, we describe the system model and a path loss model suitable for studying indoor wireless coverage. In Section III, we formulate the problem of UAV placement with an objective of minimizing the transmit power for covering the entire building. In Section IV, we describe the tradeoff introduced by the path loss model and show how to find the efficient location of the UAV such that the total transmit power is minimized in two scenarios of practical interest. Finally, we present our numerical results in Section V and make concluding remarks in Section VI.
\begin{figure*}[ht]
	\begin{minipage}[b]{0.25\linewidth}
		\centering
		\includegraphics[width=\textwidth]{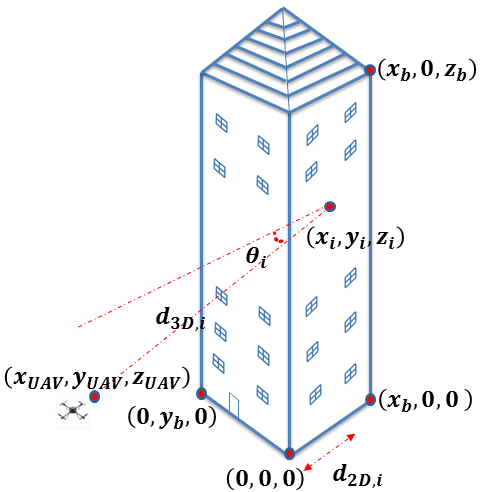}
		\caption{Parameters of the            
			path loss model
			}
		\label{fig:figure1}
	\end{minipage}
	\hspace{0.1cm}
	\begin{minipage}[b]{0.355\linewidth}
		\centering
		\includegraphics[width=\textwidth]{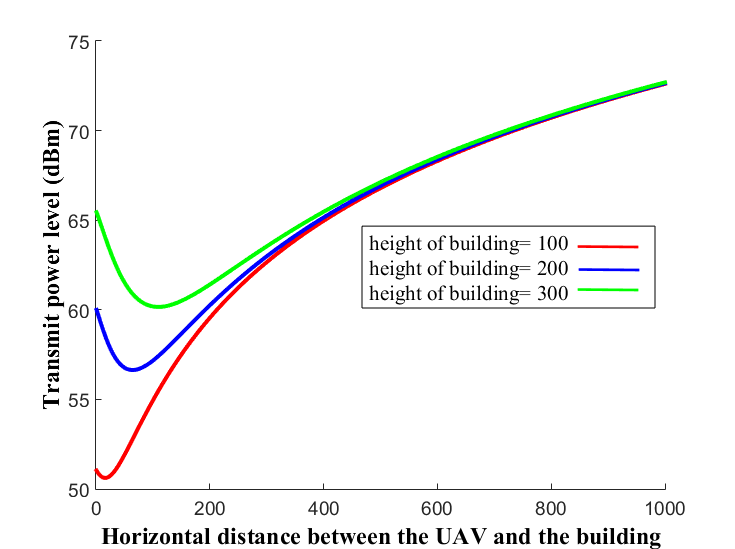}
		\caption{Transmit power required to cover the building
			}
		\label{fig:figure2}
	\end{minipage}
	\hspace{0.1cm}
	\begin{minipage}[b]{0.355\linewidth}
		\centering
		\includegraphics[width=\textwidth]{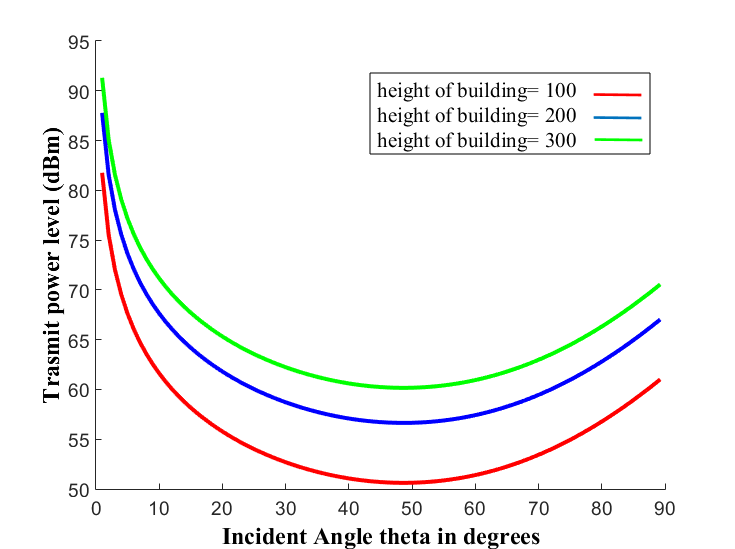}
		\caption{Transmit power required to cover the building 
			}
		\label{fig:figure3}
	\end{minipage}
\end{figure*}
	
	\section{System Model}
	\label{sec:system_model}
	\subsection{System Settings}
	\label{subsec:system settings}
	Let ($x_{UAV}$,$y_{UAV}$,$z_{UAV}$) denote the 3D location of the UAV. We assume that all users are located inside a high-rise building as shown in Figure~\ref{fig:figure1}, and use ($x_{i}$,$y_{i}$,$z_{i}$) to denote the location of user $i$. The dimensions of the high-rise building are $[0,x_b]$ $\times$ $[0,y_b]$ $\times$ $[0,z_b]$. Also, let $d_{3D,i}$ be the 3D distance between the UAV and indoor user $i$, let $\theta_{i}$ be the incident angle , and let $d_{2D,i}$ be the 2D indoor distance of user $i$ inside the building.
	
	\subsection{Outdoor-Indoor Path Loss Model}
	\label{subsec:Outdoor-Indoor Path Loss Model}
	The Air-to-Ground path loss model presented in~\cite{al2014modeling} is not appropriate when we consider wireless coverage for indoor users, because this model assumes that all users are outdoor and located at 2D points. In this paper, we adopt the Outdoor-Indoor path loss model, certified by the ITU~\cite{series2009guidelines}. The path loss is given as follows:\\
	\begin{equation}
	\begin{split}
	L_i=L_{F}+L_{B}+L_{I}= ~~~~~~~~~~~~~~~~\\
	(w\log_{10}d_{3D,i}+w\log_{10}f_{Ghz}+g_{1})+\\
	(g_{2}+g_{3}(1-\cos\theta_{i})^{2})+(g_{4}d_{2D,i})~~~~
	\end{split}
	\end{equation}
	where $L_{F}$ is the free space path loss, $L_{B}$ is the building penetration loss, and $L_{I}$ is the indoor loss. In this model, we also have
	$w$=20, $g_{1}$=32.4, $g_{2}$=14, $g_{3}$=15, $g_{4}$=0.5~\cite{series2009guidelines} and $f_{Ghz}$ is the carrier frequency (2Ghz). Note that there is a key tradeoff in the above model when the horizontal distance between the UAV and a user changes. When this horizontal distance increases, the free space path loss (i.e., $L_F$) increases as $d_{3D,i}$ increases, while the building penetration loss (i.e., $L_B$) decreases as the incident angle (i.e., $\theta_i$) decreases. Similarly, when this horizontal distance decreases, the free space path loss (i.e., $L_F$) decreases as $d_{3D,i}$ decreases, while the building penetration loss (i.e., $L_B$) increases as the incident angle (i.e., $\theta_i$) increases. 
	\section{Problem Formulation }
	Consider a transmission between a UAV located at ($x_{UAV}$,$y_{UAV}$,$z_{UAV}$) and an indoor user $i$ located at ($x_i$,$y_i$,$z_i$). The rate for user $i$ is given by:
	\begin{equation}
	\begin{split}
			C_{i}=Blog_{2}(1+\dfrac{P_{t,i}/L_i}{N})
	\end{split}
	\end{equation}
	where $B$ is the transmission bandwidth of the UAV, $P_{t,i}$ is the UAV transmit power to indoor user $i$, $L_i$ is the path loss between the UAV and indoor user $i$ and $N$ is the noise power. In this paper, we do not explicitly model interference, and instead, implicitly model it as noise.
	
	Let us assume that each indoor user has a channel with bandwidth equals $B/M$, where $M$ is the number of users inside the building and the rate requirement for each user is $v$. Then the minimum power required to satisfy this rate for each user is given by:
	 \begin{equation}
	 \begin{split}
	 P_{t,i,min}=(2^{\frac{v.M}{B}}-1)\star N\star L_i
	 \end{split}
	 \end{equation}
	 Our goal is to find the efficient location of the UAV such that the total transmit power required to satisfy the rate requirement of each indoor user is minimized. The objective function can be represented as:
	 \begin{equation}
	 \begin{split}
	 P=\sum_{i=1}^{M}(2^{\frac{v.M}{B}}-1)\star N\star L_i,\\
	 \end{split}
	 \end{equation}
	 where $P$ is the UAV total transmit power. Since $(2^{\frac{v.M}{B}}-1)\star N$ is constant, our problem can be formulated as:
	  \begin{equation}
	  \begin{split}
	  \min_{x_{UAV},y_{UAV},z_{UAV}} L_{Total}=\sum_{i=1}^{M}L_i~~~~~~~~~~~~~~~~~~~~~~~~~~~~~~~~~\\
	  subject ~to~~~~~~~~~~~~~~~~~~~~~~~~~~~~~~~~~~~~~~~~~~~~~~~~~~~~~~~~~~~\\
	  x_{min}\leq x_{UAV}\leq x_{max},~~~~~~~~~~~~~~~~~~~~~~\\
	  y_{min}\leq y_{UAV}\leq y_{max},~~~~~~~~~~~~~~~~~~~~~~\\
	  z_{min}\leq z_{UAV}\leq z_{max},~~~~~~~~~~~~~~~~~~~~~~\\
	  L_{Total}\leq L_{max}~~~~~~~~~~~~~~~~~~~~~~~~~~~
	   \end{split}
	   \end{equation}
	   Here, the first three constraints represent the minimum and maximum allowed values for $x_{UAV}$, $y_{UAV}$ and  $z_{UAV}$. In the fourth constraint, $L_{max}$ is the maximum allowable path loss and equals $P_{t,max}$$/$$((2^{\frac{v.M}{B}}-1)\star N)$, where $P_{t,max}$ is the maximum transmit power of UAV.
	
	   Finding the optimal placement of UAV is generally  difficult because the problem is non-convex. Therefore, in the next section, we consider two special cases of practical interest and derive efficient solutions for the formulated problems under these cases.
	   \section{Efficient Placement of UAV}
	   \label{sec:optimal}
	   Due to the intractability of the problem, we study the efficient placement of the UAV under two cases. In the first case, we find the minimum transmit power required to cover the building based on the location that has the maximum path loss inside the building. In the second case, we assume that the locations of indoor users are symmetric across the dimensions of each floor, and propose a gradient descent algorithm for finding the efficient location of the UAV. 
	   \subsection{Case One: The worst location in building}
	   \label{case one}
	   In this case, we find the minimum transmit power required to cover the building based on the location that has the maximum path loss inside the building. The location that has the maximum path loss in the building is the location that has maximum $d_{3D,i}$, maximum $\theta_{i}$, and maximum $d_{2D,i}$. The locations that have the maximum path loss are located at the corners of the highest and lowest floors at points $(x_b,0,0)$, $(x_b,y_b,0)$, $(x_b,0,z_b)$ and $(x_b,y_b,z_b)$ (see Figure 1). Since the locations that have the maximum path loss inside the building are the corners of the highest and lowest floors, we place the UAV at the middle of the building ($y_{UAV}$= 0.5$y_b$ and $z_{UAV}$=0.5$z_b$). Then, given the Outdoor-to-Indoor path loss model, we need to find the optimal horizontal point $x_{UAV}$ for the UAV such that the total transmit power required to cover the building is minimized.
	   
	   Now, when the horizontal distance between the UAV and this location increases, the free space path loss also increases as $d_{3D,i}$ increases, while the building penetration loss decreases because we decrease the incident angle $\theta_{i}$. Similarly, when the horizontal distance decreases, the free space path loss decreases as $d_{3D,i}$ decreases, while the building penetration loss increases as the incident angle increases. In Figure~\ref{fig:figure2}, we demonstrate the minimum transmit power required to cover a building of different heights, where the minimum transmit power required to cover the building is given by:
	   \begin{equation}
	   \begin{split}
	   P_{t,min}(dB)=P_{r,th}+L_i
	   \end{split}
	   \end{equation}
	   \begin{equation}
	   \begin{split}
	   P_{r,th}(dB)=N+\gamma_{th}
	   \end{split}
	   \end{equation}
	   Here, $P_{r,th}$ is the minimum received power, $N$ is the noise power (equals -120dBm), $\gamma_{th}$ is the threshold SNR (equals 10dB), $y_b$=50 meters , and $x_b$=20 meters. The numerical results show that there is an optimal horizontal point that minimizes the total transmit power required to cover a building. Also, we can notice that when the height of the building increases, the optimal horizontal distance also increases. This is to compensate the increased building penetration loss due to an increased incident angle.
	   
	    In Theorem 1, we characterize the optimal incident angle $\theta$ that minimizes the transmit power required to cover the building. This helps us finding the optimal horizontal distance between the UAV and the building. 
	   \begin{theorem}
	   	\label{theorem_one}
	   	When we place the UAV at the middle of building , the optimal incident angle $\theta$ that minimizes the transmit power required to cover the building will be equal to $48.654^{o}$ and the optimal horizontal distance between the UAV and the building will be equal to $((\dfrac{0.5z_{b}}{tan(48.654^{o})})^2-(0.5y_{b})^2)^{0.5}-x_{b}$.
	   \end{theorem}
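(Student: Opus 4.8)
The plan is to collapse the three-variable placement problem into a single-variable optimization over the incident angle $\theta$, use the geometry to remove $d_{3D}$, and then pin down the minimizer through a stationarity condition. First I would fix the worst-case user at one of the corner points identified above, say $(x_b,0,0)$. Since the UAV sits at the middle of the building, with $y_{UAV}=0.5y_b$ and $z_{UAV}=0.5z_b$, the vertical separation to this corner is exactly $0.5z_b$ and the lateral ($y$) separation is $0.5y_b$, both independent of the horizontal placement $x_{UAV}$. For this fixed user the indoor term $L_I=g_4 d_{2D,i}$ is constant, and the additive constants $w\log_{10}f_{Ghz}+g_1$ in $L_F$ and $g_2$ in $L_B$ do not affect the minimization. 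Hence minimizing $L_i$ (equivalently, by the relation $P_{t,min}(dB)=P_{r,th}+L_i$, the transmit power) is equivalent to minimizing $w\log_{10}d_{3D,i}+g_3(1-\cos\theta)^2$.

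The crucial step is a geometric identity. Writing $\theta$ for the angle whose tangent is the vertical separation $0.5z_b$ over the in-plane (horizontal) separation, the $3$D distance obeys $d_{3D,i}\sin\theta=0.5z_b$, i.e.\ $d_{3D,i}=0.5z_b/\sin\theta$. Substituting this into the reduced objective yields a function of $\theta$ alone,
\[
g(\theta)=w\log_{10}\!\Big(\frac{0.5z_b}{\sin\theta}\Big)+g_3(1-\cos\theta)^2=\mathrm{const}-w\log_{10}(\sin\theta)+g_3(1-\cos\theta)^2 ,
\]
in which the only dimension-dependent piece, $w\log_{10}(0.5z_b)$, is constant in $\theta$. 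This already explains why the optimal angle is a universal number independent of $x_b,y_b,z_b$.

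Next I would set $g'(\theta)=0$. Differentiating and clearing the denominator (using $\sin^2\theta=(1-\cos\theta)(1+\cos\theta)$) reduces stationarity to
\[
2g_3(1-\cos\theta)\sin^2\theta=\frac{w}{\ln 10}\cos\theta ,
\]
a transcendental equation in $\theta$ with $w=20$ and $g_3=15$. I would solve it numerically to obtain $\theta\approx 48.654^{\circ}$. Finally, writing $d$ for the horizontal distance between the UAV and the building, the in-plane separation to the corner is $\sqrt{(d+x_b)^2+(0.5y_b)^2}$; equating this with $0.5z_b/\tan\theta$ and solving for $d$ gives the stated expression $\big((\tfrac{0.5z_b}{\tan(48.654^{\circ})})^2-(0.5y_b)^2\big)^{0.5}-x_b$.

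The main obstacle is this transcendental equation, which has no closed form, so I must argue that its root is a genuine global minimizer rather than an arbitrary critical point. I would do this by showing $g'(\theta)$ is strictly increasing on $(0^{\circ},90^{\circ})$: both $-\cot\theta$ and $(1-\cos\theta)\sin\theta$ are increasing there (for the latter, $\frac{d}{d\theta}(1-\cos\theta)\sin\theta=\cos\theta-\cos 2\theta>0$ since cosine is strictly decreasing on $(0^{\circ},180^{\circ})$). Thus $g$ is strictly convex, its stationary point is unique, and because $g'(\theta)\to-\infty$ as $\theta\to 0^{\circ}$ while $g'(\theta)>0$ at $90^{\circ}$, that point is the global minimum. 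A secondary check is that the symmetric placement keeps all four corner candidates equidistant, so the worst-case user stays well defined as $d$ varies.
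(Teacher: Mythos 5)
Your proposal is correct and follows essentially the same route as the paper: reduce the worst-case corner's path loss to a one-variable function of the incident angle via $d_{3D}=0.5z_b/\sin\theta$, derive the stationarity condition $\frac{w}{\ln 10}\cos\theta=2g_3\sin^2\theta(1-\cos\theta)$, justify global optimality by monotonicity of the derivative (equivalent to the paper's positive second derivative), and recover the horizontal distance by the Pythagorean relation. The only cosmetic difference is that the paper rewrites the stationarity condition as a cubic in $\cos\theta$ and lists its roots, whereas you solve the transcendental equation numerically; both yield $\theta\approx 48.654^{\circ}$.
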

	   \begin{proof}
	     In order to find the optimal horizontal point, we rewrite the equation that represents the path loss in terms of the incident angle ($\theta_{i}$) and the altitude difference between the UAV and the user $i$ ($\Delta h_{i}$):
	   \begin{equation}
	   \begin{split}
	   L_{i}(\Delta h_{i},\theta_{i})=w\log_{10}\dfrac{\Delta h_{i}}{\sin\theta_{i}}+w\log_{10}f_{Ghz}+g_{1}\\
	   +g_{2}+g_{3}(1-\cos\theta_{i})^{2}+g_{4}d_{2D,i}
	   \end{split}
	   \end{equation}
	   We know that the altitude difference between the UAV and the location that has the maximum path loss is constant for a given building. Now, when we take the first derivative with respect to $\theta$ and assign it to zero, we get:
	   \begin{equation}
	   \begin{split}
	   \dfrac{dL(\theta)}{d\theta}=\dfrac{w}{ln10} 
	   \dfrac{\dfrac{-\Delta h.\cos\theta}{\sin^{2}\theta}}{\dfrac{\Delta h}{\sin\theta}}+2g_{3}\sin\theta(1-\cos\theta)=0 ~~~~~~~~\\
	   \dfrac{dL(\theta)}{d\theta}=\dfrac{-w}{ln10}\dfrac{\cos\theta}{\sin\theta}+2g_{3}\sin\theta(1-\cos\theta)=0~~~~~~~~~~~~~~~~ \\
	   \dfrac{w}{ln10}\cos\theta=2g_{3}sin^{2}\theta(1-\cos\theta)~~~~~~~~~~~~~~~~~~~~~~~~~~~~~~~\\
	   \dfrac{w}{ln10}\cos\theta=2g_{3}(1-\cos^{2}\theta)(1-\cos\theta)~~~~~~~~~~~~~~~~~~~~~~~~\\
	   2g_{3}\cos^{3}\theta-2g_{3}\cos^{2}\theta-(\dfrac{w}{ln10}+2g_{3})\cos\theta+2g_{3}=0~~~~~~
	   \end{split}
	   \end{equation}
	   To prove that the function is convex, we take the second derivative and we get:
	   \begin{equation}
	   \begin{split}
	   \dfrac{d^{2}L}{d\theta^{2}}=\dfrac{w}{ln10}\dfrac{1}{\sin^{2}\theta}+2g_{3}\cos\theta(1-\cos\theta)+2g_{3}\sin^{2}\theta>0\\
	   for~0<\theta\leq 90
	   \end{split}
	   \end{equation}
	    Equation (9) has only one valid solution which is $\cos\theta$$=$0.6606, where the non valid solutions are $\cos\theta$$=$1.4117 and $\cos\theta$$=$\textendash1.0723. Therefore, the optimal incident angle between the UAV and the location that has the maximum path loss inside the building will be $48.654^{o}$.
	    
	    In order to find the optimal horizontal distance between the UAV and the building, we apply the pythagorean's theorem. The optimal horizontal distance between the UAV and the location that has maximum path loss inside the building can be represented as: 
	    \begin{equation}
	    \begin{split}
	    d_{H}=((\dfrac{0.5z_{b}}{tan(48.654^{o})})^2-(0.5y_{b})^2)^{0.5}
	    \end{split}
	    \end{equation}
	    In order to find the optimal horizontal distance between the UAV and the building, we subtract $x_{b}$ from $d_{H}$ and we get: 
	     \begin{equation}
	     \begin{split}
	     d_{opt}=((\dfrac{0.5z_{b}}{tan(48.654^{o})})^2-(0.5y_{b})^2)^{0.5}-x_{b}
	     \end{split}
	     \end{equation}
	\end{proof}
	   In Figure~\ref{fig:figure3}, we demonstrate the transmit power required to cover the building as a
	   function of incident angle, we can notice that the optimal angle that we characterize in Theorem 1 gives us the minimum transmit power required to cover the building.	
	   \subsection{Case Two: The locations of indoor users are symmetric across the $xy$ and $xz$ planes}
	   In this case, we assume that the locations of indoor users are symmetric across the $xy$-plane  ((0,0,0.5$z_b$),($x_b$,0,0.5$z_b$) ,($x_b$,$y_b$,0.5$z_b$),(0,$y_b$,0.5$z_b$))) and the $xz$-plane ((0,0.5$y_b$,0), ($x_b$,0.5$y_b$,0), ($x_b$,0.5$y_b$,$z_b$),(0,0.5$y_b$,$z_b$)). First, we prove that $z_{UAV}$=$0.5z_{b}$ and $y_{UAV}$=$0.5y_{b}$ when the locations of indoor users are symmetric across the $xy$ and $xz$ planes, then we will use the gradient descent algorithm to find the efficient $x_{UAV}$ that minimizes the transmit power required to cover the building. Our simulation results show that there is only one local minimum point and the gradient descent algorithm will successfully converge to this point.  
	   \begin{theorem}
	   	When the locations of indoor users are symmetric across the $xy$ and $xz$ planes, the efficient $z_{UAV}$ that minimizes the power required to cover the indoor users will be equal $0.5z_{b}$.
	   \end{theorem}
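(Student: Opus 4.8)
The plan is to reduce the dependence of the objective $L_{Total}=\sum_{i=1}^{M}L_i$ on $z_{UAV}$ to a one-dimensional problem and then exploit the reflection symmetry of the user set about the plane $z=0.5z_b$. First I would fix $x_{UAV}$ and $y_{UAV}$ and rewrite each $L_i$ as a function of $z_{UAV}$ alone. Writing the horizontal UAV-to-user distance $r_i=\sqrt{(x_{UAV}-x_i)^2+(y_{UAV}-y_i)^2}$, which is independent of $z_{UAV}$, and noting that the indoor distance $d_{2D,i}$ is likewise independent of $z_{UAV}$, the only dependence on $z_{UAV}$ enters through $d_{3D,i}=\sqrt{r_i^2+(z_{UAV}-z_i)^2}$ and through $\cos\theta_i=r_i/d_{3D,i}$. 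Hence each term has the form $L_i(z_{UAV})=\psi_i\!\left(|z_{UAV}-z_i|\right)$ for a single-user profile $\psi_i(\delta)=\tfrac{w}{2\ln 10}\ln(r_i^2+\delta^2)+g_3\bigl(1-r_i/\sqrt{r_i^2+\delta^2}\bigr)^2+\text{const}$; in particular $L_i$ depends on $z_{UAV}$ only through the squared vertical offset and is an even function of $z_{UAV}-z_i$.

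Next I would use the hypothesis that the users are symmetric across the $xy$-plane at height $0.5z_b$: every user at $(x_i,y_i,z_i)$ is matched by a mirror user at $(x_i,y_i,z_b-z_i)$, which shares the same horizontal distance $r_i$ and therefore the same profile $\psi_i$. Grouping the sum into such mirror pairs, each pair contributes $g_i(z_{UAV})=\psi_i(|z_{UAV}-z_i|)+\psi_i(|z_{UAV}-(z_b-z_i)|)$. A short substitution $z_{UAV}=0.5z_b\pm t$ shows $g_i(0.5z_b+t)=g_i(0.5z_b-t)$, so each $g_i$, and hence $L_{Total}$, is symmetric about $z_{UAV}=0.5z_b$; differentiating then gives $\left.\tfrac{dL_{Total}}{dz_{UAV}}\right|_{z_{UAV}=0.5z_b}=0$, which establishes the first-order stationarity condition at the midpoint. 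Any user lying exactly in the symmetry plane already attains its minimum there, so it does not affect this conclusion.

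To upgrade stationarity to a genuine minimizer, I would show each pair sum $g_i$ is increasing in $t=|z_{UAV}-0.5z_b|$. Writing $a_i=|z_i-0.5z_b|$, its derivative in $t$ equals $\psi_i'(t+a_i)+\mathrm{sgn}(t-a_i)\,\psi_i'(|t-a_i|)$, which is nonnegative provided $\psi_i$ is increasing and convex on $[0,\infty)$; monotonicity of $\psi_i$ is immediate since both the free-space and penetration terms grow with the vertical offset, so the crux is convexity. I expect this to be the main obstacle, because the free-space term $\tfrac{w}{2\ln 10}\ln(r_i^2+\delta^2)$ has second derivative proportional to $r_i^2-\delta^2$ and is therefore concave once $\delta>r_i$; the plan is to verify that the penetration term $g_3(1-\cos\theta_i)^2$ is convex enough to dominate this over the admissible vertical range $z_{min}\le z_{UAV}\le z_{max}$ (equivalently $\delta\le\max(|z_i-z_{min}|,|z_i-z_{max}|)$), mirroring the second-derivative check used in Theorem~\ref{theorem_one}. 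Once $g_i$ is shown to increase away from $t=0$, every pair, and thus $L_{Total}$, is minimized at $z_{UAV}=0.5z_b$; since this holds for every fixed $x_{UAV},y_{UAV}$, it holds at the joint optimum, which proves the claim.
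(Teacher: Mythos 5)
Your first two paragraphs establish exactly what the paper's proof establishes, and by essentially the same mechanism: the paper computes $\tfrac{dL_{Total}}{dz_{UAV}}$ term by term, splits the users into those below and above the UAV, and observes that under the mirror symmetry the contributions cancel pairwise at $z_{UAV}=0.5z_b$. Your reformulation via the even profile $\psi_i(|z_{UAV}-z_i|)$ and mirror pairs is a cleaner way to say the same thing, and correctly isolates the fact that the only $z_{UAV}$-dependence enters through $d_{3D,i}$ and $\cos\theta_i$ while $r_i$ and $d_{2D,i}$ are constants. Note that the paper stops at first-order stationarity (it asserts the derivative vanishes at the midpoint and goes no further, relying on simulation for the claim that this is the unique minimum), so up to that point you have matched it.

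The third paragraph, where you try to upgrade stationarity to minimality, contains a genuine gap that you partly anticipated but underestimated: the profile $\psi_i$ is \emph{not} convex on the admissible vertical range, and the penetration term does not rescue it. Writing $\delta=|z_{UAV}-z_i|$, the free-space part has second derivative $\tfrac{w}{2\ln 10}\cdot\tfrac{2(r_i^2-\delta^2)}{(r_i^2+\delta^2)^2}\sim -\tfrac{w}{\ln 10}\delta^{-2}$, while the penetration part $g_3(1-\cos\theta_i)^2$ is bounded above by $g_3$ and increases to that bound, so it is eventually concave as well; a direct computation gives $\tfrac{d^2}{d\delta^2}\bigl(1-\cos\theta_i\bigr)^2\sim -4r_i\delta^{-3}$ for $\delta\gg r_i$. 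Hence $\psi_i''<0$ once $\delta$ is a few multiples of $r_i$, which is precisely the high-rise regime of this paper ($z_b$ up to $300$ m against horizontal scales of $10$--$50$ m). Consequently $\psi_i'$ rises and then decays to zero, so for a mirror pair with large $a_i=|z_i-0.5z_b|$ the derivative $\psi_i'(t+a_i)-\psi_i'(a_i-t)$ can be negative, and the per-pair monotonicity you need fails. The theorem as stated (and as proved in the paper) only claims stationarity at $0.5z_b$; if you want genuine global minimality you would need an aggregate argument over all pairs rather than a pairwise convexity argument, and the paper offers only numerical evidence (a single local minimum) for that stronger claim.
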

	   \begin{proof}
	   	Consider that $m_{1}$ represents the users that have altitude lower than the UAV altitude and $m_{2}$ represents the users that have altitude higher than the UAV altitude, then:
	   \begin{equation*}
	   \begin{split}
	   d_{3D,i}=((x_{UAV}-x_i)^2+(y_{UAV}-y_i)^2+(z_{UAV}-z_i)^2)^{0.5} \\
	   ~~~~~~~~~~~~~~~~~~~~~~~~~~~~~~~~\forall z_{UAV}>z_i\\  
	   d_{3D,i}=((x_{UAV}-x_i)^2+(y_{UAV}-y_i)^2+(z_i-z_{UAV})^2)^{0.5} \\
	   ~~~~~~~~~~~~~~~~~~~~~~~~~~~~~~~~\forall z_{UAV}<z_i\\
	   \end{split}
	   \end{equation*}
	   Also, 
	    \begin{equation*}
	    \begin{split}
	   cos_{\theta_i}=\dfrac{((x_{UAV}-x_i)^2+(y_{UAV}-y_i)^2)^{0.5}}{((x_{UAV}-x_i)^2+(y_{UAV}-y_i)^2+(z_{UAV}-z_i)^2)^{0.5}}\\
	   ~~~~~~~~~~~~~~~~~~~~~~~~~~~~~~~~\forall z_{UAV}>z_i\\ 
	   cos_{\theta_i}=\dfrac{((x_{UAV}-x_i)^2+(y_{UAV}-y_i)^2)^{0.5}}{((x_{UAV}-x_i)^2+(y_{UAV}-y_i)^2+(z_i-z_{UAV})^2)^{0.5}}\\
	   ~~~~~~~~~~~~~~~~~~~~~~~~~~~~~~~~\forall z_{UAV}<z_i\\
	   \end{split}
	   \end{equation*}
	   Rewrite the total path loss:
	   \begin{equation*}
	   \begin{split}
	   L_{Total}=~~~~~~~~~~~~~~~~~~~~~~~~~~~~~~~~~~~~~~~~~~~~~~~~~~~~~~~~~~~~~~~~~~~~~~~~~~~~~~\\
	   \sum_{i=1}^{m_1}(wlog_{10}(d_{3D,i})+g_{3}(1-\cos\theta_{i})^{2})+~~~~~~~~~~~~~~~~~~~~~~~~~~~~~~~~~~~~~~~~~~\\
	   \sum_{i=1}^{m_2}(wlog_{10}(d_{3D,i})+g_{3}(1-\cos\theta_{i})^{2})+K~~~~~~~~~~~~~~~~~~~~~~~~~~~~~~~~~~~~~~~\\
	  \end{split}
	  \end{equation*}
	  Where:
	  \begin{equation*}
	  \begin{split}
	   K=\sum_{i=1}^{M}(wlog_{10}f_{Ghz}+g_{1}+g_{2}+g_{4}d_{2D,i})~~~~~~~~~~~~~~~~~~~\\
	   \end{split}
	   \end{equation*}
	   Now, take the derivative with respect to $z_{UAV}$, we get:
	    \begin{equation*}
	    \begin{split}
	    \dfrac{dL_{Total}}{dz_{UAV}}=~~~~~~~~~~~~~~~~~~~~~~~~~~~~~~~~~~~~~~~~~~~~~~~~~~~~~~~~~~~~~~\\
	    \sum_{i=1}^{m_1}\dfrac{w}{ln10}\frac{(z_{UAV}-z_i)}{((x_{UAV}-x_i)^2+(y_{UAV}-y_i)^2+(z_{UAV}-z_i)^2)}~~~\\
	    +2g_{3}.~~~~~~~~~~~~~~~~~~~~~~~~~~~~~~~~~~~~~~~~~~~~~~~~~~~~~~~~~~~~~~~~~~~\\
	    (1-\dfrac{((x_{UAV}-x_i)^2+(y_{UAV}-y_i)^2)^{0.5}}{((x_{UAV}-x_i)^2+(y_{UAV}-y_i)^2+(z_{UAV}-z_i)^2)^{0.5}}).~~~~~\\
	    (\dfrac{((x_{UAV}-x_i)^2+(y_{UAV}-y_i)^2)^{0.5}(z_{UAV}-z_i)}{((x_{UAV}-x_i)^2+(y_{UAV}-y_i)^2+(z_{UAV}-z_i)^2)^{\frac{3}{2}}})+~~~~\\ 
	    \sum_{i=1}^{m_2}\dfrac{w}{ln10}\frac{-(z_{i}-z_{UAV})}{((x_{UAV}-x_i)^2+(y_{UAV}-y_i)^2+(z_i-z_{UAV})^2)}~~~\\
	    +2g_{3}.~~~~~~~~~~~~~~~~~~~~~~~~~~~~~~~~~~~~~~~~~~~~~~~~~~~~~~~~~~~~~~~~~~~\\
	    (1-\dfrac{((x_{UAV}-x_i)^2+(y_{UAV}-y_i)^2)^{0.5}}{((x_{UAV}-x_i)^2+(y_{UAV}-y_i)^2+(z_i-z_{UAV})^2)^{0.5}}).~~~~~\\
	    (\dfrac{-((x_{UAV}-x_i)^2+(y_{UAV}-y_i)^2)^{0.5}(z_i-z_{UAV})}{((x_{UAV}-x_i)^2+(y_{UAV}-y_i)^2+(z_i-z_{UAV})^2)^{\frac{3}{2}}})~~~~
	    \end{split}
	    \end{equation*}
	    
	    Rewrite the $\dfrac{dL_{Total}}{dz_{UAV}}$ again, we have:
	    
	    \begin{equation*}
	    \begin{split}
	  \dfrac{dL_{Total}}{dz_{UAV}} =\sum_{i=1}^{m_1}\dfrac{w}{ln10}\frac{(z_{UAV}-z_i)}{d_{3D,i}^{2}}+~~~~~~~~~~~~~~~~~~~~~\\
	   2g_{3}.(1-\dfrac{((x_{UAV}-x_i)^2+(y_{UAV}-y_i)^2)^{0.5}}{d_{3D,i}}).~~~~~~~~~~~\\
	   (\dfrac{((x_{UAV}-x_i)^2+(y_{UAV}-y_i)^2)^{0.5}(z_{UAV}-z_i)}{d_{3D,i}^{3}})+~~~\\ 
	   \sum_{i=1}^{m_2}\dfrac{w}{ln10}\frac{-(z_i-z_{UAV})}{d_{3D,i}^{2}}+~~~~~~~~~~~~~~~~~~~~~~~~~~~~~~~~~\\
	   2g_{3}.(1-\dfrac{((x_{UAV}-x_i)^2+(y_{UAV}-y_i)^2)^{0.5}}{d_{3D,i}}).~~~~~~~~~~\\
	   (\dfrac{-((x_{UAV}-x_i)^2+(y_{UAV}-y_i)^2)^{0.5}(z{i}-z_{UAV})}{d_{3D,i}^{3}})~~~\\ 
	\end{split}
\end{equation*}
The equation above equals zero when the UAV altitude equals the half of the building height, where the locations of indoor users are symmetric across the $xy$ and $xz$ planes. 
\end{proof}
\begin{theorem}
	When the locations of indoor users are symmetric across the $xy$ and $xz$ planes, the efficient $y_{UAV}$ that minimizes the power required to cover the indoor users will be equal $0.5y_{b}$.
\end{theorem}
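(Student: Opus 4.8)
The plan is to mirror the proof of the preceding theorem, with the role of $z_{UAV}$ now played by $y_{UAV}$, and to exploit the symmetry of the user locations across the $xz$-plane (the plane $y=0.5y_b$). First I would partition the users into two groups according to their $y$-coordinate relative to the UAV: let $n_1$ index the users with $y_i<y_{UAV}$ and $n_2$ index those with $y_i>y_{UAV}$. As before, I write $d_{3D,i}$ and $\cos\theta_i$ explicitly, noting that $y_{UAV}$ now enters through the horizontal separation $\rho_i=((x_{UAV}-x_i)^2+(y_{UAV}-y_i)^2)^{0.5}$, which appears both inside $d_{3D,i}$ and in the numerator of $\cos\theta_i=\rho_i/d_{3D,i}$. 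The terms $w\log_{10}f_{Ghz}+g_1+g_2+g_4 d_{2D,i}$ do not depend on $y_{UAV}$ and are collected into a constant $K$, exactly as in the previous proof.

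Next I would differentiate $L_{Total}$ with respect to $y_{UAV}$. The free-space term contributes $\frac{w}{\ln 10}\frac{y_{UAV}-y_i}{d_{3D,i}^2}$ for each user, and the penetration term $g_3(1-\cos\theta_i)^2$ contributes a factor obtained by the chain rule through $\cos\theta_i$. This is the step I expect to be the main obstacle: unlike the $z_{UAV}$ derivative, here $y_{UAV}$ sits in both the numerator $\rho_i$ and the denominator $d_{3D,i}$ of $\cos\theta_i$, so that derivative is bulkier and must be organized carefully before any cancellation is visible. The useful structural fact to extract is that, since $\frac{\partial \rho_i}{\partial y_{UAV}}=\frac{y_{UAV}-y_i}{\rho_i}$ and $\frac{\partial d_{3D,i}}{\partial y_{UAV}}=\frac{y_{UAV}-y_i}{d_{3D,i}}$, every $y_{UAV}$-derivative term carries an overall signed factor $(y_{UAV}-y_i)$ multiplying a quantity that depends on $y_{UAV}$ only through $\rho_i$ and $d_{3D,i}$.

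The observation that closes the argument is the reflection symmetry $y\mapsto y_b-y$. By hypothesis the user set is invariant under this reflection, so each user $(x_i,y_i,z_i)$ is matched by a mirror user $(x_i,y_b-y_i,z_i)$ with the same $x_i$ and $z_i$. Evaluating the derivative at $y_{UAV}=0.5y_b$, the two members of a pair have equal and opposite signed offsets, $y_{UAV}-y_i=-(y_{UAV}-(y_b-y_i))$, hence identical $\rho_i$, $d_{3D,i}$, and $\cos\theta_i$, while the factor $(y_{UAV}-y_i)$ flips sign. Consequently each mirror pair contributes derivative terms that are negatives of one another, the group $n_1$ cancels against $n_2$ term by term, and $\frac{dL_{Total}}{dy_{UAV}}=0$ at $y_{UAV}=0.5y_b$. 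Equivalently, the reflection symmetry forces $L_{Total}(y_{UAV})=L_{Total}(y_b-y_{UAV})$, so $L_{Total}$ is symmetric about $y_{UAV}=0.5y_b$ and the midpoint is a stationary point; the simulation-supported unimodality in $y_{UAV}$ then identifies it as the minimizer.
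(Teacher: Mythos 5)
Your proposal is correct and follows essentially the same route as the paper, which simply states that Theorem 3 is proved like Theorem 2: partition the users by their $y$-coordinate relative to the UAV, differentiate $L_{Total}$ with respect to $y_{UAV}$, and observe that at $y_{UAV}=0.5y_b$ the mirror pairs $(x_i,y_i,z_i)$ and $(x_i,y_b-y_i,z_i)$ contribute derivative terms of equal magnitude and opposite sign. Your explicit handling of the extra chain-rule complication (that $y_{UAV}$ enters both $\rho_i$ and $d_{3D,i}$ in $\cos\theta_i$, unlike the $z_{UAV}$ case) and the cleaner reflection identity $L_{Total}(y_{UAV})=L_{Total}(y_b-y_{UAV})$ are welcome refinements of the argument the paper leaves implicit.
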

The proof of Theorem 3 is similar to that of Theorem 2.
 \begin{figure*}[!t]
 	\centering
 	\includegraphics[scale=0.161]{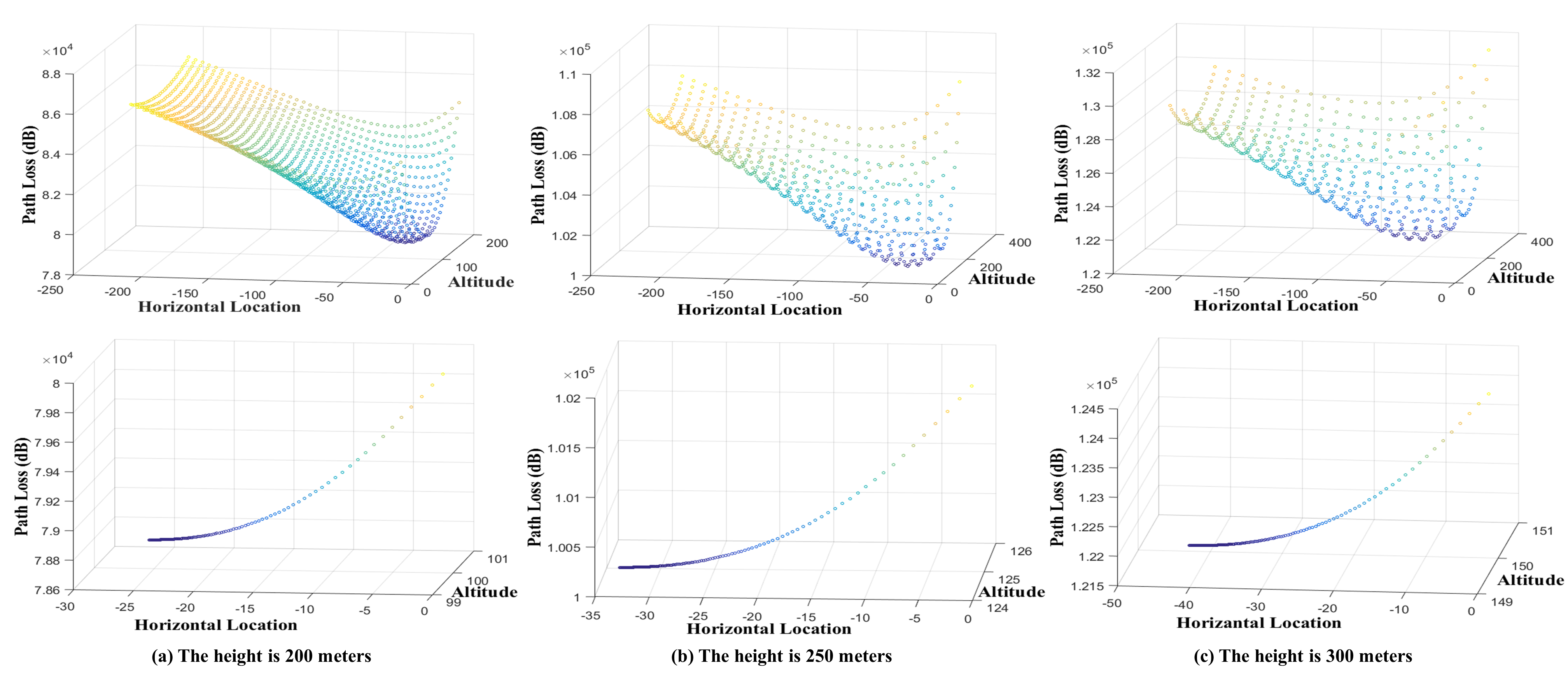}
 	\caption{Total path loss for different building heights}
 	\label{fig:figurefour}
 \end{figure*}
 The question now is how to find the efficient horizontal point $x_{UAV}$ that minimizes the total transmit power.
	   In order to find this point, we use the gradient descent algorithm~\cite{sutton1998reinforcement}:
	   \begin{equation}
	   \begin{split}
	   x_{UAV,n+1}=x_{UAV,n}-a\dfrac{dL_{Total}}{dx_{UAV,n}}~~~~~~~~~~~~~~~~~\\
	   \end{split}
	   \end{equation}
	   Where:\\
	   \begin{equation*}
	   \begin{split}
	     \dfrac{dL_{Total}}{dx_{UAV}}=\sum_{i=1}^{M}\dfrac{w}{ln10}\dfrac{-(x_i-x_{UAV})}{d_{3D,i}^2}+~~~~~~~~~~~~~~~~~~~~~~~~~~~\\
	   2g_{3}.(1-\dfrac{((x_i-x_{UAV})^2+(y_i-y_{UAV})^2)^{0.5}}{d_{3D,i}}).~~~~~~~~~~~~~~~~~\\
	   (\dfrac{(x_i-x_{UAV})d_{3D,i}((x_i-x_{UAV})^2+(y_i-y_{UAV})^2)^{-0.5}}{{d_{3D,i}^2}}-~~\\
	   \dfrac{((x_i-x_{UAV})^2+(y_i-y_{UAV})^2)^{0.5}(x_i-x_{UAV})d_{3D,i}^{-1}}{d_{3D,i}^2})~~~~~~~
	   \end{split}
	   \end{equation*}
	    $a$: the step size.\\
	    $d_{3D,i}$=$((x_i-x_{UAV})^2+(y_i-y_{UAV})^2+(z_i-z_{UAV})^2)^{0.5}$\\
	    
	    The pseudo code of this algorithm is shown in Algorithm 1.
	      
	    \begin{algorithm}
	    	\begin{algorithmic}
	    		\STATE \textbf{Input:}
	    		\STATE The 3D locations of the users inside the building.
	    		\STATE The step size $a$, the step tolerance $\epsilon$.
	    		\STATE The dimensions of the building  $[0,x_b]$ $\times$ $[0,y_b]$ $\times$ $[0,z_b]$.
	    			\STATE The maximum number of iterations $N_{max}$.
	    			\STATE \textbf{Initialize} $x_{UAV}$
	    			\STATE \textbf{For} $n$=1,2,..., $N_{max}$
	    			\STATE  ~~~~~$x_{UAV,n+1}$ $\leftarrow$ $x_{UAV,n}$$-$ $a\dfrac{dL_{Total}}{dx_{UAV,n}}$
	    			\STATE ~~~~~~~~~~~~~\textbf{If} $\lVert$ $x_{UAV,n}$ $-$ $x_{UAV,n+1}$ $\rVert$ $<$ $\epsilon$
	    		\STATE ~~~~~\textbf{Return:} \textbf{$x_{UAV,opt}$} $=$ $x_{UAV,n+1}$
	    		\STATE \textbf{End for}
	    	\end{algorithmic}
	    	\caption{Efficient $x_{UAV}$ using gradient descent algorithm}
	    \end{algorithm}
	    \begin{figure*}[!t]
	    	\centering
	    	\includegraphics[scale=0.16]{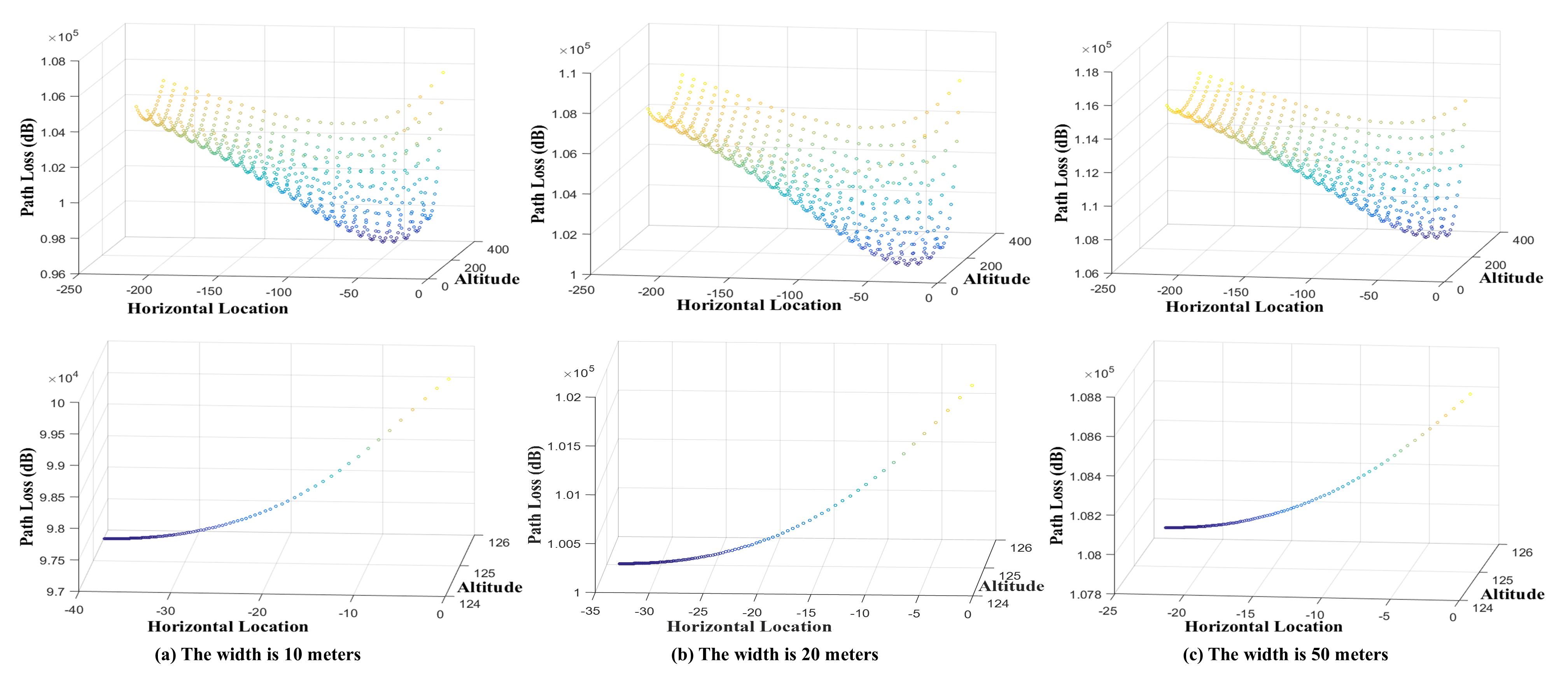}
	    	\caption{Total path loss for different building widths}
	    	\label{fig:figurefive}
	    \end{figure*}
	    
	    \section{Numerical Results}
	    \label{sec:results}
	    In this section, we verify our results for the second case. We assume that each floor contains 20 users. Then, we apply the gradient descent algorithm to find the efficient horizontal point $x_{UAV}$ that minimizes the transmit power required to cover the indoor users. Table I lists the parameters used in the numerical analysis.
	    \begin{table}[!h]
	    	\scriptsize
	    	\renewcommand{\arraystretch}{1.3}
	    	\caption{Parameters in numerical analysis}
	    	\label{table}
	    	\centering
	    	\begin{tabular}{|c|c|}
	    		\hline
	    		Height of building $z_{b}$& 200 meters, 250 meters and 300 meters\\
	    		\hline
	    		Horizontal width of building $x_{b}$ & 10 meters, 20 meters and 50 meters \\
	    			\hline
	    			Vertical width of building $y_{b}$ & 50 meters\\
	    			\hline
	    			Hight of each floor & 5 meters\\
	    			\hline
	    			Step size $a$ & 0.01\\
	    			\hline
	    			Maximum number of iterations $N_{max}$ & 500\\
	    			\hline 
	    			The carrier frequency $f_{Ghz}$ & 2Ghz\\
	    			\hline 
	    			Number of users in each floor & 20 users\\
	    			\hline
	    		\end{tabular}
	    	\end{table}
	    
	    In Figure~\ref{fig:figurefour}, we find the efficient horizontal points for a building of different heights. In the upper part of the figures, we find the total path loss at different locations ($x_{UAV}$,0.5$y_{b}$,$z_{UAV}$) and in the lower part of the figures, we find the efficient horizontal point $x_{UAV}$ that results in the minimum total path loss using the gradient descent algorithm. As can be seen from the figures, when the height of the building increases, the efficient horizontal point $x_{UAV}$ increases. This is to compensate the increased building penetration loss due to an increased incident angle. 
	    
	    In Figure~\ref{fig:figurefive}, we investigate the impact of different building widths (i.e., $x_b$). We fix the building height to be 250 meters and vary the building width. As can be seen from the figures, when the building width increases, the efficient horizontal distance decreases. This is to compensate the increased indoor path loss due to an increased building width.
	    
	    In~\cite{hazim2017}, we validate the simulation results by using the particle swarm optimization algorithm and study the problem when the locations of indoor users are uniformly distributed in each floor.
	    \section{Conclusion}
	    \label{sec:Conclusion}
	    In this paper, we study the problem of providing wireless coverage for users inside a high-rise building using a single UAV. First, we demonstrate why the Air-to-Ground path loss model is not appropriate for considering indoor users with 3D locations. Then, we present the Outdoor-to-Indoor path loss model, show the tradeoff in this model, and study the problem of minimizing the transmit power required to cover the building. Due to the intractability of the problem, we study the efficient placement of the UAV under two cases. In the first case, we find the minimum transmit power required to cover the building based on the location that has the maximum path loss inside the building. In the second case, we assume that the locations of indoor users are symmetric across the $xy$ and $xz$ planes and we use the gradient descent algorithm to find the efficient placement of the UAV. In order to model more realistic scenarios, we will consider different types of user distribution in our future work. We will also study the problem of providing wireless coverage using multiple UAVs. 
	    \section*{Acknowledgment}
	    This work was supported in part by the NSF under Grants CNS-1647170 and CNS-1651947.

	\bibliographystyle{IEEEtran}
	\bibliography{UAVpath}

\begin{thebibliography}{10}
\providecommand{\url}[1]{#1}
\csname url@samestyle\endcsname
\providecommand{\newblock}{\relax}
\providecommand{\bibinfo}[2]{#2}
\providecommand{\BIBentrySTDinterwordspacing}{\spaceskip=0pt\relax}
\providecommand{\BIBentryALTinterwordstretchfactor}{4}
\providecommand{\BIBentryALTinterwordspacing}{\spaceskip=\fontdimen2\font plus
\BIBentryALTinterwordstretchfactor\fontdimen3\font minus
  \fontdimen4\font\relax}
\providecommand{\BIBforeignlanguage}[2]{{%
\expandafter\ifx\csname l@#1\endcsname\relax
\typeout{** WARNING: IEEEtran.bst: No hyphenation pattern has been}%
\typeout{** loaded for the language `#1'. Using the pattern for}%
\typeout{** the default language instead.}%
\else
\language=\csname l@#1\endcsname
\fi
#2}}
\providecommand{\BIBdecl}{\relax}
\BIBdecl

\bibitem{bupe2015relief}
P.~Bupe, R.~Haddad, and F.~Rios-Gutierrez, ``Relief and emergency communication
  network based on an autonomous decentralized uav clustering network,'' in
  \emph{SoutheastCon 2015}.\hskip 1em plus 0.5em minus 0.4em\relax IEEE, 2015,
  pp. 1--8.

\bibitem{bor2016efficient}
R.~I. Bor-Yaliniz, A.~El-Keyi, and H.~Yanikomeroglu, ``Efficient 3-d placement
  of an aerial base station in next generation cellular networks,'' in
  \emph{Communications (ICC), 2016 IEEE International Conference on}.\hskip 1em
  plus 0.5em minus 0.4em\relax IEEE, 2016, pp. 1--5.

\bibitem{al2014modeling}
A.~Al-Hourani, S.~Kandeepan, and A.~Jamalipour, ``Modeling air-to-ground path
  loss for low altitude platforms in urban environments,'' in \emph{2014 IEEE
  Global Communications Conference}.\hskip 1em plus 0.5em minus 0.4em\relax
  IEEE, 2014, pp. 2898--2904.

\bibitem{mozaffari2015drone}
M.~Mozaffari, W.~Saad, M.~Bennis, and M.~Debbah, ``Drone small cells in the
  clouds: Design, deployment and performance analysis,'' in \emph{IEEE Global
  Communications Conference (GLOBECOM)}, 2015, pp. 1--6.

\bibitem{mozaffari2016optimal}
M.~Mozaffari\vspace{0mm}, W.~Saad, M.~Bennis, and M.~Debbah, ``Optimal
  transport theory for power-efficient deployment of unmanned aerial
  vehicles,'' \emph{IEEE International Conference on Communications (ICC),
  Kuala Lumpur, Malaysia,}, 2016.

\bibitem{mozaffari2016efficient}
M.~Mozaffari, W.~Saad, M.~Bennis, and M.~Debbah, ``Efficient deployment of
  multiple unmanned aerial vehicles for optimal wireless coverage,''
  \emph{arXiv preprint arXiv:1606.01962}, 2016.

\bibitem{kalantari2016number}
E.~Kalantari, H.~Yanikomeroglu, and A.~Yongacoglu, ``On the number and 3d
  placement of drone base stations in wireless cellular networks,'' in
  \emph{IEEE Vehicular Technology Conference}, 2016, pp. 18--21.

\bibitem{shakhatreh2016continuous}
H.~Shakhatreh, A.~Khreishah, J.~Chakareski, H.~B. Salameh, and I.~Khalil, ``On
  the continuous coverage problem for a swarm of uavs,'' in \emph{Sarnoff
  Symposium, 2016 IEEE 37th}.\hskip 1em plus 0.5em minus 0.4em\relax IEEE,
  2016, pp. 130--135.

\bibitem{ericsson}
``Ericsson report optimizing the indoor experience,
  http://www.ericsson.com/res/docs/2013/real-performance-indoors.pdf,'' 2013.

\bibitem{alcatel}
``In-building wireless: One size does not fit all,
  http://www.alcatel-lucent.com/solutions/in-building/in-building-infographic.''

\bibitem{cisco}
``Cisco service provider wi-fi: A platform for business innovation and revenue
  generation,
  http://www.cisco.com/c/en/us/solutions/collateral/service-provider/service-provider-wi-fi/solution{\_}overview{\_}c22{\_}642482.html.''

\bibitem{comm}
``Using high-power das in high-rise buildings,
  http://www.commscope.com/docs/using-high-power-das-in-high-rise-buildings-an-318376-ae.pdf.''

\bibitem{amplitic}
``Coverage solution for high-rise building,
  http://www.amplitec.net/products-2-coverage-solution-for-high-rise-building.html.''

\bibitem{zhang2016study}
S.~Zhang, Z.~Zhao, H.~Guan, and H.~Yang, ``Study on mobile data offloading in
  high rise building scenario,'' in \emph{Vehicular Technology Conference (VTC
  Spring), 2016 IEEE 83rd}.\hskip 1em plus 0.5em minus 0.4em\relax IEEE, 2016,
  pp. 1--5.

\bibitem{series2009guidelines}
M.~Series, ``Guidelines for evaluation of radio interface technologies for
  imt-advanced,'' \emph{Report ITU}, no. 2135-1, 2009.

\bibitem{sutton1998reinforcement}
R.~S. Sutton and A.~G. Barto, \emph{Reinforcement learning: An
  introduction}.\hskip 1em plus 0.5em minus 0.4em\relax MIT press Cambridge,
  1998, vol.~1, no.~1.

\bibitem{hazim2017}
H.~Shakhatreh, A.~Khreishah, A.~Alsarhan, I.~Khalil, A.~Sawalmeh, and
  O.~Noor~Shamsiah, ``Efficient 3d placement of a uav using particle swarm
  optimization,'' in \emph{The International Conference on Information and
  Communication Systems (ICICS 2017) (accepted)}.

\end{thebibliography}

\end{document}